\documentclass{article}
\usepackage[margin=1in]{geometry}
\usepackage{biblatex}
\usepackage{graphicx} 
\usepackage{amsmath,amsthm,amssymb,amsfonts}
\usepackage{amsmath}
\usepackage{bbm}
\newtheorem{proposition}{Proposition}[section]
\usepackage[usenames, dvipsnames,table]{xcolor}
\usepackage{hyperref}
\usepackage{bm}
\usepackage{marginnote}
\usepackage{setspace}
\usepackage{tikz}
\usepackage{longtable}
\usepackage{booktabs,tabularx,array}
\usepackage{arydshln}
\usetikzlibrary{positioning,arrows.meta}
\usepackage{authblk}

\title{Markov state models revisited: \\
Principles and algorithms for unbiased observables}
\author[1]{David Aristoff}
\affil[1]{Colorado State University, aristoff@rams.colostate.edu}
\author[2]{Robert J. Webber}
\affil[2]{University of California, San Diego, rwebber@ucsd.edu}
\author[3]{Daniel M. Zuckerman}
\affil[3]{Oregon Health and Science University, zuckermd@ohsu.edu}

\date{\today}

\newcommand{\cij}{C_{ij}}
\newcommand{\cijk}{C_{ij}^{(k)}}

\newcommand{\ci}{C_i}
\newcommand{\cik}{C_i^{(k)}}

\newcommand{\flux}{\Phi}
\newcommand{\fij}{\Phi_{ij}}
\newcommand{\fatobij}{\flux^{A \to B}_{ij}}

\newcommand{\ind}{I}
\newcommand{\kab}{k_{AB}}
\newcommand{\key}[1]{\vspace{3mm} \hrule \vspace{1mm} \noindent KEY POINT: #1 \vspace{1mm} \hrule \vspace{3mm}}

\newcommand{\pp}{{\pi}} %
\newcommand{\patob}{\pi^{A \to B}}
\newcommand{\pequil}{\pi^{\mathrm{equil}}}
\newcommand{\ppequil}{\pp^{\mathrm{equil}}}

\newcommand{\pimacroatob}{\tilde{\pi}^\mathrm{macro}}

\newcommand{\pimicroatob}{\tilde{\pi}^\mathrm{micro}}
\newcommand{\pinit}{\rho^{X}}
\newcommand{\px}{\pp^{X}}
\newcommand{\ppatob}{\pp^{A \to B}}

\newcommand{\ptau}{p_\tau}
\newcommand{\ptauatob}{\ptau^{A \to B}}
\newcommand{\ptotatob}{{\overline{\pp}}^{A \to B}}

\newcommand{\qa}{q^A}
\newcommand{\qb}{q^B}

\newcommand{\tatob}{T^{A \to B}}

\newcommand{\tij}{T_{ij}}
\newcommand{\tatobij}{T^{A \to B}_{ij}}
\newcommand{\tequil}{T^{\mathrm{equil}}}
\newcommand{\tequilij}{T^{\mathrm{equil}}_{ij}}
\newcommand{\tloc}{m} 
\newcommand{\tlocatob}{m^{A \to B}}

\newcommand{\txij}{{T_{ij}^{X}}}
\newcommand{\tx}{T^{X}}
\newcommand{\tmat}{{T}}

\newcommand{\tmacroatob}{\tilde{T}^\mathrm{macro}}

\newcommand{\tmicro}{T^\mathrm{micro}}
\newcommand{\tmicroab}{\tmicro_{\alpha \beta}}
\newcommand{\tmicroatob}{\tilde{T}^\mathrm{micro}}
\newcommand{\tmsm}{T^\mathrm{MSM}}
\newcommand{\tmsmatob}{\tilde{T}^\mathrm{MSM}}
\newcommand{\wtnew}{w^\mathrm{new}}

\begin{document}

\maketitle

\begin{abstract}
Markov state models (MSMs) have become ubiquitous tools for analyzing molecular dynamics (MD) simulations because of their simple, powerful premise:
although complete MD sampling may be impossible, the MSM can ``stitch together'' transition probabilities derived from local sampling to provide a global picture of kinetics and mechanisms.
In the standard MSM framework, the available MD data is organized into a single transition matrix, which is then used to estimate all observables at a lag time chosen so the coarse-grained dynamics are approximately Markovian.
This approach leads to avoidable model bias and motivates long lag times that obscure short-timescale processes of interest.
In contrast, this paper shows how to obtain unbiased coarse-grained observables at any fixed lag time and for any fixed coarse-graining in the limit of infinite, properly weighted data.
The central idea is to replace the single-matrix framework with two transition matrices --- one representing equilibrium dynamics and another representing source-sink recycling dynamics --- and use the correct matrix or matrices to estimate the matched dynamical observables.
\end{abstract}





\section{Introduction}



For more than twenty years, researchers have used Markov state models (MSMs) to analyze molecular dynamics (MD) simulation data \cite{swope2004msm-one,swope2004msm-two}.
From the beginning, their vision was to derive a single transition matrix, coarse-grained in space and discretized in time, which describes all the equilibrium, kinetic, and mechanistic features of a molecular system.
Researchers have acknowledged that MSMs are inherently approximate \cite{prinz2011markov,nuske2017markov,wu2017variational}, yet additional steps are needed to describe and address all sources of approximation error.
See Fig.\ \ref{fig:msm-pipeline-errors} for an overview of the MSM model building pipeline and its error sources.

\begin{figure}[t]
\centering

\definecolor{msmblue}{HTML}{3F55FF}
\definecolor{msmred}{HTML}{FF4A4A}
\definecolor{msmgray}{HTML}{8F8F8F}
\definecolor{msmblack}{HTML}{111111}

\begin{tikzpicture}[
    >=Latex,
    node distance=1cm and .65cm,
    box/.style={
        thick,
        rounded corners=3pt,
        text width=2.6cm,
        minimum height=1.6cm,
        align=center,
        font=\sffamily\scriptsize,
        text=white,
        inner sep=4pt
    },
    topbox/.style={
        box,
        fill=msmblue!85!white,
        draw=msmblue!70!black
    },
    bottombox/.style={
        box,
        fill=msmred!85!white,
        draw=msmred!70!black
    },
    side label/.style={
        font=\sffamily\bfseries\small,
        text=msmblack,
        align=left
    },
    arrow/.style={
        thick,
        draw=msmblack,
        -{Latex[length=2.5mm,width=1.8mm]}
    }
]

\node[topbox] (data) {
    \textbf{Data}\\[1pt]
    Trajectories
};

\node[topbox, right=of data] (framework) {
    \textbf{Transition matrices}\\[1pt]
    Equilibrium\\
    Non-equilibrium
};

\node[topbox, right=of framework] (observables) {
    \textbf{Observables}\\[1pt]
    Stationary distributions\\
    Rate constants\\
    Mechanisms
};

\draw[arrow] ([xshift=3pt]data.east)
    -- ([xshift=-3pt]framework.west);

\draw[arrow] ([xshift=3pt]framework.east)
    -- ([xshift=-3pt]observables.west);

\node[bottombox, below=0.75cm of data] (sampling) {
    Limited\\
    sampling
};

\node[bottombox, below=0.75cm of framework] (estimators) {
    Biased\\
    local\\
    distributions
};

\node[bottombox, below=0.75cm of observables] (bias) {
    Biased\\
    observable\\
    formulas
};

\draw[arrow] (sampling.north) -- (data.south);
\draw[arrow] (estimators.north) -- (framework.south);
\draw[arrow] (bias.north) -- (observables.south);

\node[side label, left=.35cm of data] (msm) {
    Markov\\
    state\\
    modeling
};

\node[side label, left=.55cm of sampling] (errors) {
    Error\\
    sources
};

\end{tikzpicture}

\caption{Markov state modeling pipeline and the corresponding sources of error.
This perspective focuses on two systematic errors: errors from biased within-cluster sampling distributions and biased observable formulas.}
\label{fig:msm-pipeline-errors}

\end{figure}

The goal of this perspective paper is to clearly set out principles and algorithms for ``unbiased'' estimates of both equilibrium and nonequilibrium observables.
Here, an ``unbiased'' estimate is one that converges to a theoretically exact, coarse-grained observable in the limit of infinite, properly weighted data.
We argue that the current literature has overlooked some of the capacity for MSMs to provide unbiased estimates.
We will explain what to do and what \emph{not} to do when using MSMs.
Along the way, we will critically re-examine MSM approaches and emphasize three related themes.
\begin{enumerate}
\item \emph{Markovian dynamics do not lead to coarse-scale Markovian behavior.} Traditional MSMs take literally the Markov assumption by using a single transition matrix to compute all observables.
However, a coarse-grained system does not typically satisfy the Markov property at short lag times, leading to biased dynamical estimates.
\item \emph{Unbiased estimation requires two transition matrices.} By using two transition matrices that represent the equilibrium and nonequilibrium steady-state (NESS) conditions \cite{russo2021unbiased}, we can provide unbiased estimates of dynamical observables for any lag time and choice of coarse-graining.
In this nontraditional approach, each observable can be estimated without bias from the matched transition matrix or matrices.
\item \emph{Unbiased estimation requires stationary local distributions.} 
MSM estimates are sensitive to the local distributions within clusters \cite{wu2017variational}, which must be stationary with respect to equilibrium or source-sink recycling dynamics for unbiased estimation of observables.
While traditional MSM computations are locked into the data at hand, the newly introduced RiteWeight algorithm is designed to reweight trajectories toward the correct equilibrium or nonequilibrium distribution \cite{kania2026randomized}.
\end{enumerate}
The rest of this introductory section will describe the three themes in more detail (Secs.~\ref{sec:not-markov}--\ref{sec:local-global}) and then compare our approach against previous MSM analyses (Sec.~\ref{sec:comparison}).

\subsection{Markovian dynamics do not lead to coarse-scale Markovian behavior}
\label{sec:not-markov}

The heart of the MSM approach is the transition matrix $\tmat$.
The elements of this matrix,
\begin{equation*}
\tij = \tij(\tau),
\end{equation*}
specify the conditional transition probabilities between clusters or MSM ``states''.
Given a start point drawn from a specified distribution within
cluster $i$, the element $\tij$ is the conditional probability of
being found in cluster $j$ one lag time later.
The transition probability only accounts for the location of the trajectory after the full lag time, without accounting for visits to $j$ before the end of the interval.

The MSM construction leads to a subtle point of confusion regarding the Markovian nature of the dynamics.
In molecular simulation, the vast majority of computer algorithms that propagate dynamics are Markovian, since the current state is sufficient to generate the next state of a system.
If a simulation algorithm uses auxiliary variables or a finite history,
the state can be enlarged to include those variables so that the resulting
dynamics are Markovian.
However, once phase space points are grouped together into clusters, the Markov property no longer holds at short lag times \cite{prinz2011markov}.

We can understand the failure of Markovianity by considering the energy landscapes of biomacromolecules (Fig.\ \ref{fig:landscape}).
Full proteins contain hundreds or even thousands of residues, and even a simple tripeptide has $>100$ energy basins \cite{elber1990rxnpaths}.
Due to the finite availability of data and the complexity of configuration space, clusters must contain multiple energy basins \cite{elber1990rxnpaths,wales2018landscapes,wales2019pathways}.
Therefore, a trajectory's specific location within a cluster affects the future probability to be in other clusters. 

\begin{figure}[t]
    \centering
    \includegraphics[width=.95\linewidth]{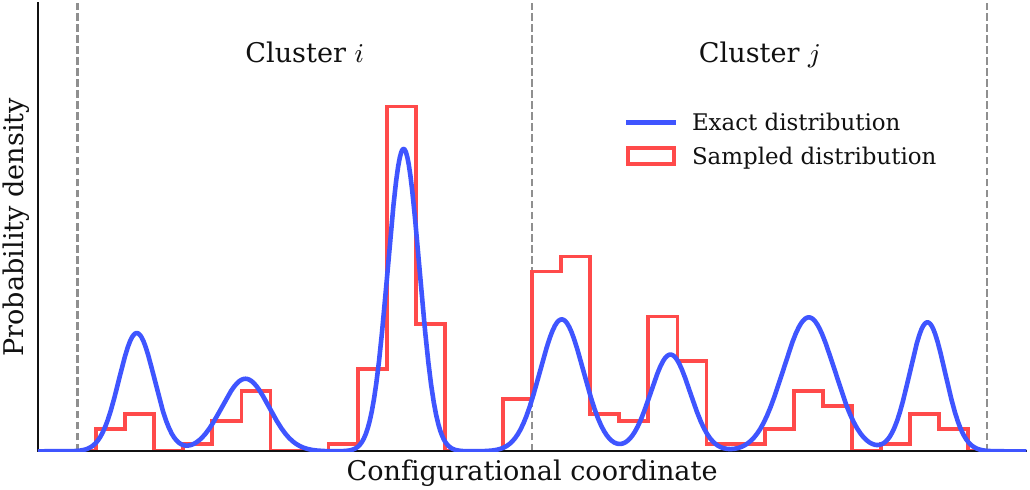}
    \caption{Schematic rough energy landscape and clusters of an MSM.  A biomacromolecule consists of thousands of atoms, so individual clusters $i$ and $j$ of the Markov state model inevitably contain many local energy basins.
    The finite available trajectory data (red histogram) will not typically match the targeted exact distribution internal to the clusters (blue line).}
    \label{fig:landscape}
\end{figure}

Traditional MSMs employ long lag times to improve Markovian behavior and enhance the accuracy of observables.
As the lag time grows, the original phase-space point or ``microstate'' of the trajectory becomes less determinative of the final cluster, so Markovian behavior is approached \cite{chodera-noe2014markov-review}. 
However, the Markovian behavior comes at a cost.
A long lag time can degrade an MSM's ability to provide mechanistic insights, since longer lag times jump over behaviors occurring at shorter timescales \cite{suarez2021markov-can-not}.

The challenge of MSM modeling can thus be understood in terms of two intrinsic timescales for each cluster $i$.
\begin{itemize}
    \item $t_{\rm relax}^i$ is the timescale for a trajectory to relax to local equilibrium within cluster $i$.
    \item $t_{\rm exit}^{i}$ is the timescale for the trajectory to exit cluster $i$, subsequent to relaxation.
\end{itemize}
Ideally, a single lag time $\tau$ is chosen such that for all clusters $i$,
\begin{equation}
    t_{\rm relax}^i \ll \tau \ll t_{\rm exit}^i.
\end{equation}
Violations of the first inequality break Markovianity, while violations of the second inequality miss events that may be of physical interest.
The two requirements may be impossible to satisfy simultaneously
with a single lag time.

\subsection{Unbiased estimation requires two transition matrices} \label{sec:two_matrices}

A fundamental issue is that different types of observables correspond naturally to different ensembles. 
Equilibrium state probabilities are naturally associated with the
equilibrium ensemble, whereas rate constants and fluxes can be
represented conveniently through source--sink nonequilibrium ensembles \cite{vanden2009tilting,dinner2009separating,suarez2014simultaneous,suarez2021markov-can-not}.
In most cases, qualitative differences between the ensembles cannot be captured by a single transition matrix \cite{bhatt2011reversibility}.
To eliminate the bias in traditional MSMs, we will use two ensemble-specific transition matrices, one representing equilibrium conditions and one representing source-sink conditions.
Tab.~\ref{tab:ab_observables} identifies which matrix or matrices are needed for unbiased estimation of each observable.
We will present detailed formulas for unbiased estimators in the later sections.

\begin{table}[t]
\centering
\begingroup
\setlength{\tabcolsep}{4pt}
\begin{tabular}{l|ccccc}
\hline
Required distributions 
& $\ppequil_i$ 
& $\ppatob_i$ 
& Committor $\qa_i$ 
& MFPT $\tlocatob$
& Flux $\fatobij$ \\
\hline
\noalign{\vskip 3pt}
Equilibrium 
& $\checkmark$ 
& 
& $\checkmark$ 
& 
&  \\
$A \to B$ NESS 
& 
& $\checkmark$ 
& $\checkmark$ 
& $\checkmark$ 
& $\checkmark$ \\
\hline
\end{tabular}
\endgroup
\caption{Key observables and the distributions required for their unbiased estimation. 
Here $\ppequil_i$ is the equilibrium stationary probability for cluster $i$, 
$\ppatob_i$ is the stationary probability for cluster $i$ in the $A \to B$ nonequilibrium steady state (NESS),
$\qa_i$ is the committor representing the probability of reaching state $A$ before $B$,
$\tlocatob$ is the mean first-passage time (MFPT) from state $A$ to state $B$ starting from cluster $i$,
and $\fatobij$ is the $A \to B$ probability flux from state $i$ into state $j$ over one lag interval.}
\label{tab:ab_observables}
\end{table}

\subsection{Unbiased estimation requires stationary local distributions}
\label{sec:local-global}

To estimate unbiased observables, we need local sampling within each cluster to match the appropriate equilibrium or non-equilibrium ensemble as described in Tab.~\ref{tab:ab_observables}.
The key to correct local sampling is  the principle of \emph{stationarity}.
Both the equilibrium and $A \to B$ steady state distributions are stationary distributions, after accounting for the appropriate recycling boundary conditions.
This means the amount of probability mass flowing into each region matches the amount of probability mass leaving each region.
Remarkably, as we show below, we can estimate stationary distributions for a coarse-grained system using only the microscopic Markov property and the principle of stationarity.
No Markov property for the coarse-scale model is required.
In our discussion of computational approaches (Sec. \ref{sec:algo}), we will emphasize the newly introduced RiteWeight algorithm \cite{kania2026randomized}, which reweights trajectories in a self-consistent iteration designed to make their weighted distribution consistent with the target equilibrium or nonequilibrium stationary distribution.

\subsection{Comparison to previous analyses} \label{sec:comparison}

For general background on MSMs, the review paper \cite{husic2018markov} provides historical perspective, and the article \cite{pande2010everything} gives a user-friendly introduction.
Two Python packages that made MSMs accessible to a wide audience are MSMBuilder \cite{harrigan2017msmbuilder} and PyEMMA \cite{schererpyemma2015}. While these packages are no longer actively maintained, many of their capabilities are now available through the deeptime package \cite{hoffmann2021deeptime}.

Beyond general introductions,
several works analyze the error in MSM observables and discuss ways to reduce the error.
Sarich, No\'e, and Sch\"utte \cite{sarich2010approximation} and Prinz et al. \cite{prinz2011markov} analyze projection errors and show how the coarse-grained MSM dynamics fail to satisfy the Markov property.
N\"uske et al. \cite{nuske2017markov} analyze a different error source due to mismatched sampling distributions within MSM states.
These early analyses motivate longer lag times, improved state definitions, and practical estimators for initial-distribution bias. 

The recommendations for improving MSMs are valuable, but they differ from our emphasis in the present work.
In many applications, constructing states that accurately resolve the slowest dynamical modes is difficult, and increasing the lag time skips over short-timescale processes of biological significance.
We therefore focus on unbiased estimation of exact coarse-grained observables for any fixed lag time and partition.
In particular, we advocate using two transition matrices based on equilibrium and nonequilibrium steady-state distributions, as well as reweighting trajectories according to the RiteWeight algorithm \cite{kania2026randomized}.

Another relevant body of work considers dynamical observables derived from MSMs, such as the committors, mean first-passage times, and nonequilibrium fluxes in Tab.~\ref{tab:ab_observables}.
Su{\'a}rez et al.\ identify which coarse-grained observables can be estimated without bias from traditional MSMs \cite{suarez2021markov-can-not}.
Related work by Russo et al. \cite{russo2020iterative} constructs unbiased estimators for dynamical observables.
The present paper builds on the previous development by identifying which ensemble-specific transition matrices are required for unbiased estimation of each observable.
Of equal importance, we present all the necessary theory in a consistent and accessible fashion.

Finally, past literature has studied the impact of finite-sampling error in MSMs.
Singhal and Pande \cite{singhal2005error} pointed attention to the propagation of sampling error in estimates of mean first-passage times.
Noe \cite{noe2008probability} and Chodera and No\'e \cite{chodera2010bayesian} developed Bayesian approaches for estimating and addressing sampling error, and Trendelkamp-Schroer et al. \cite{trendelkamp2015estimation} extended this framework to time-reversible MSMs.
Kozlowski and Grubmuller compared error sources in MSMs and emphasized the role of finite sampling \cite{grubmuller2023uncertainties}.
Most recently, Tuchkov et al. \cite{tuchkov2025error}
analyzed mean first-passage time and committor estimates from MSMs, including their sensitivity to finite sampling.
These works are complementary to the present paper.
They address uncertainty caused by finite data, whereas we focus on systematic bias that remains in the infinite-data limit when transition matrices are constructed from mismatched within-cluster distributions or
when observable formulas incorrectly assume coarse-grained Markovianity.

\section{A fully solvable discrete model}
\label{sec:8-state-model}

To gain insight into the fundamental limitations of MSMs, we start by examining a discretized model of a one-dimensional diffusion process.
For concreteness, we assume eight microstates are a perfect physical description of our system; there are no finer states.
The ground-truth transition probabilities are defined for a particular lag time $\tau$ by the microscopic transition matrix,
\begin{equation}
\tmicro =
\begin{pmatrix}
2/3 & 1/3 & 0   & 0 & 0 & 0 & 0 & 0 \\
1/3 & 1/3 & 1/3 & 0 & 0 & 0 & 0 & 0 \\
0   & 1/3 & 1/3 & 1/3 & 0 & 0 & 0 & 0 \\
0   & 0   & 1/3 & 1/3 & 1/3 & 0 & 0 & 0 \\
0   & 0   & 0   & 1/3 & 1/3 & 1/3 & 0 & 0 \\
0   & 0   & 0   & 0   & 1/3 & 1/3 & 1/3 & 0 \\
0   & 0   & 0   & 0   & 0 & 1/3 & 1/3 & 1/3 \\
0   & 0   & 0   & 0   & 0 & 0   & 1/3 & 2/3
\end{pmatrix}
\label{tmicro}
\end{equation}
Each element $\tmicroab$ of this matrix represents the probability to transition from microstate $\alpha$ to $\beta$ in a lag time $\tau$, or more precisely the probability for a trajectory to be observed at $\beta$ after having been at $\alpha$ a time increment $\tau$ earlier.
From each state there is a $1/3$ probability to move one step left or right or to remain in the same state, except for the extreme left- and right-most states where there is a $2/3$ probability to remain in the state.

The microscopic transition matrix $\tmicro$ completely defines our system.
Thus, the exact dynamics are described by
\begin{equation}
\label{eq:probability_flux}
    \pp_\beta(t+\tau) =  \sum_\alpha \pp_\alpha(t) \, \tmicroab,
\end{equation}
where $\pp_\alpha(t)$ denotes the normalized probability for each microstate $\alpha$ at time $t$.
The probability observed in microstate $\beta$ is a sum over all probability flowing into that state or remaining in the state. 
Note that summation over $\alpha$ includes $\alpha=\beta$, and the term $\pp_\beta(t) \, \tmicro_{\beta \beta}$ implicitly captures probability remaining in the state.

From equation \eqref{eq:probability_flux}, the equilibrium solution for the matrix $\tmicro$ is simply a uniform distribution over the eight states, as expected.
\begin{equation}
    \pequil_\alpha=\frac{1}{8} \hspace{1cm} (\alpha = 1, 2, \ldots, 8).
    \label{pequil-eighth}
\end{equation}
We can confirm the detailed balance condition holds for this distribution,
\begin{equation}
    \pequil_\alpha \, \tmicroab = \pequil_\beta \tmicro_{\beta \alpha}.
    \label{equil-micro}
\end{equation}
Because of detailed balance, the equilibrium distribution is stationary, and it has no probability sources or sinks.

\subsection{Equilibrium comparison between coarse- and fine-grained distributions}

In an MSM, we organize the microstates into groups and compute an associated  coarse-grained transition matrix.
We do this now, not by using trajectory data as would be usual \cite{chodera-noe2014markov-review}, but instead by calculating the idealized MSM that would be obtained from an infinite trajectory generated using $\tmicro$. In later examples and discussion, we will examine the consequences of finite data in MSMs.

If we group microstates into neighboring pairs, the eight microstates become four clusters of two microstates each.
We can compute the exact $4 \times 4$ MSM equilibrium transition matrix $\tmsm$ by averaging transition probabilities weighted by the equilibrium distribution.
The elements of this matrix are
\begin{equation}
    \tmsm_{ij} 
    = \frac { \sum_{\alpha \in i} \sum_{\beta \in j} \pequil_\alpha \tmicroab }
            { \sum_{\alpha \in i} \pequil_\alpha }
    = \sum_{\alpha \in i} \sum_{\beta \in j} \frac{1}{2} \, \tmicroab
    \label{tmacroij}
\end{equation}
where the notation $\alpha \in i$ refers to all microstates in the coarse state, e.g., $\alpha = 1, 2$ for $i = 1$.
In \eqref{tmacroij}, the first equality is general and the second is particular to the 8-state model under consideration. 
These calculations yield the same matrix $\tmsm_{ij}$
that would result from counting transitions in an infinite equilibrium-distributed trajectory.

The defining equation for the macroscopic matrix elements \eqref{tmacroij} leads to
\begin{equation}
\tmsm =
\begin{pmatrix}
5/6 & 1/6 & 0 & 0 \\
1/6 & 2/3 & 1/6 & 0 \\
0 & 1/6 & 2/3 & 1/6 \\
0 & 0 & 1/6 & 5/6
\end{pmatrix}.
\label{tmacro-equil}
\end{equation}
This coarse-grained transition matrix allows us to compute equilibrium quantities of interest, and we start with the simplest: equilibrium populations.  
Using the coarse-grained analog of the detailed balance condition \eqref{equil-micro}, namely,
\begin{equation}
    \pequil_i \, \tmsm_{ij} = \pequil_j \tmsm_{ji},
    \label{equil-macro}
\end{equation}
we can verify that the expected solution $\pi_i=1/4$ holds for every $i$.
The uniform coarse-grained solution is correct because we obtain the same value $1/8 + 1/8 = 1/4$ when we sum the microscopic equilibrium probabilities in each coarse-grained state.
The coarse-grained and microscopic solutions are consistent.

\subsection{Nonequilibrium comparison between coarse- and fine-grained distributions}

The consistency between microscopic and coarse-grained MSM predictions can break down if we impose nonequilibrium boundary
conditions after the
coarse-graining. Here, we will focus on a source-sink steady state distribution with the source (initial) state $A$ corresponding to the first coarse state ($i=1$) and the sink (target) state $B$ corresponding to the final coarse state ($i=4$).  
Trajectories reaching the sink state are re-started in the source state at the next interval of the lag time, according to a modified microscopic matrix
\begin{equation}
\tilde{T}^{\rm micro} =
\begin{pmatrix}
2/3 & 1/3 & 0   & 0 & 0 & 0 & 0 & 0 \\
1/3 & 1/3 & 1/3 & 0 & 0 & 0 & 0 & 0 \\
0   & 1/3 & 1/3 & 1/3 & 0 & 0 & 0 & 0 \\
0   & 0   & 1/3 & 1/3 & 1/3 & 0 & 0 & 0 \\
0   & 0   & 0   & 1/3 & 1/3 & 1/3 & 0 & 0 \\
0   & 0   & 0   & 0   & 1/3 & 1/3 & 1/3 & 0 \\
0 & 1 & 0 & 0 & 0 & 0 & 0 & 0 \\
0 & 1 & 0 & 0 & 0 & 0 & 0 & 0 \\
\end{pmatrix}.
\label{tmicro-atob}
\end{equation}
The specific choice of feedback to the $\alpha = 2$ state from the $\alpha = 7, 8$ states is justified by the first entry distribution discussed later in Sec.\ \ref{sec:nonequilibrium}, which builds in a natural correspondence with the equilibrium ensemble \cite{bhatt2011reversibility}.
The $A \to B$ nonequilibrium steady state (NESS) is defined by the leading left eigenvector of the microscopic matrix \eqref{tmicro-atob}, which is
\begin{equation}
    \tilde{\pi}^{\rm micro} = \frac{1}{61} \begin{pmatrix}
        15 \\ 15 \\ 12 \\ 9 \\ 6 \\ 3 \\ 1 \\ 0
    \end{pmatrix}
    \label{pimicroatob}
\end{equation}
Hence, $\tilde{\pi}^{\rm micro}$ is the unique stationary vector for the $A \to B$ recycling dynamics.
Due to the recycling, the first entry into the sink $B$ occurs at microstate $\alpha=7$, and the process is recycled before moving to $\alpha=8$; hence the stationary probability of $\alpha=8$ is zero.

In the unbiased approach to coarse-graining, we define the $A \to B$ coarse-grained transition matrix via stationary weighted averages of the microscopic transition probabilities,
\begin{equation}
\label{eq:tmacroatobij}
    \tmacroatob_{ij} 
    = \frac { \sum_{\alpha \in i} \sum_{\beta \in j} \pimicroatob_\alpha \,\tmicroatob_{\alpha \beta} }
    { \sum_{\alpha \in i} \pimicroatob_\alpha}.
\end{equation}
The defining equation \eqref{eq:tmacroatobij} yields the NESS coarse-grained matrix
\begin{equation}
\tmacroatob =
\begin{pmatrix}
5/6 & 1/6 & 0 & 0 \\
4/21 & 14/21 & 3/21 & 0 \\
0 & 2/9 & 6/9 & 1/9 \\
1 & 0 & 0 & 0
\end{pmatrix}
\label{tmacroatob}
\end{equation}
In the rows $i = 2, 3, 4$, the NESS coarse-grained matrix $\tmacroatob$ differs from the equilibrium coarse-grained matrix \eqref{tmacro-equil}.

In contrast, the traditional, biased route first constructs the equilibrium coarse-grained matrix and only afterward imposes the source--sink boundary condition. This procedure keeps the equilibrium-weighted transition rows for the intermediate coarse states, rather than recomputing those rows using the $A \to B$ NESS distribution.
This yields a coarse-grained $A \to B$ MSM transition matrix,
\begin{equation}
\tmsmatob =
\begin{pmatrix}
5/6 & 1/6 & 0 & 0 \\
1/6 & 2/3 & 1/6 & 0 \\
0 & 1/6 & 2/3 & 1/6 \\
1 & 0 & 0 & 0
\end{pmatrix}
\label{tmsmatob}
\end{equation}
Compared to the original MSM matrix \eqref{tmacro-equil}, probability arriving at the fourth state ($B$ = sink) is fed back to the first state ($A$ = source).
This MSM matrix differs from the NESS matrix \eqref{tmacroatob} because probability flows symmetrically from the intermediate states ($i = 2, 3$).
In the NESS matrix, probability is tilted to flow to the left, in the direction of the source state.

With different transition matrices from the two coarse-graining schemes, not surprisingly the coarse-grained stationary vectors for \eqref{tmacroatob} and \eqref{tmsmatob} differ.
\begin{equation}
    \pimacroatob = \frac{1}{61} \begin{pmatrix}
        30 \\ 21 \\ 9 \\ 1
    \end{pmatrix}
    \approx
    \begin{pmatrix}
        0.492 \\ 0.344 \\ 0.148 \\ 0.016
    \end{pmatrix}
    \quad \text{and} \quad
    \tilde{\pi}^{\rm MSM} 
    = \frac{1}{37} \begin{pmatrix}
        18 \\ 12 \\ 6 \\ 1
    \end{pmatrix}
    \approx \begin{pmatrix}
        0.486 \\ 0.324 \\ 0.162 \\ 0.027
    \end{pmatrix}.
    \label{8state-pimacro}
\end{equation}
Only the stationary probability vector $\pimacroatob$ is correct because it matches the coarse-graining of the microscopic stationary vector \eqref{pimicroatob}, i.e., summing the elements in pairs.
The MSM vector slightly underweights the first two states and overweights the last two states.

Are these differences important?
The lesson here is not the large or small numeric differences, because the magnitudes could be tuned by adjusting the underlying microscopic transition matrix. Rather, the message is the following.

\key{
Even with infinite equilibrium sampling, the traditional MSM route to computing nonequilibrium observables is theoretically flawed, leading to biased observables.}

We will elaborate on this key point in the remainder of the perspective paper, demonstrating that the bias observed in the stationary vectors \eqref{8state-pimacro} is mirrored in biased mean first-passage times, committors, and mechanistic path fluxes.

\section{Dynamical observables for a continuous-state system}

In this section, we define equilibrium and dynamical observables for a continuous-state system.
For notation, we let $x$ denote a phase-space variable that represents all positions, velocities, and other degrees of freedom needed to make the dynamics Markovian.
Trajectories $x(t)$ are discretized by a lag time $\tau$, so they are observed at times
\begin{equation}
    x(0), \, x(\tau), \, x(2\tau), \, \ldots \,.
    \label{xtraj-tau}
\end{equation}
The trajectories evolve according to a transition probability density
\begin{equation}
    \ptau(x|x'),
    \label{ptau}
\end{equation}
which describes the dynamics initiated from $x'$ and observed after a time interval $\tau$.
For example, the transition probability density might represent MD simulation with a thermostat or barostat.
Because the system is Markovian in the full phase space, only a single prior phase point $x'$ is needed to specify the distribution of outcomes.
See Tab.~\ref{tab:notation} for a list of notation used throughout the paper.

\begin{table}[t]
\centering
\small
\renewcommand{\arraystretch}{1.06}
\setlength{\tabcolsep}{5pt}

\begin{tabular}{
    @{}
    >{\raggedright\arraybackslash}p{0.15\linewidth}
    >{\raggedright\arraybackslash}p{0.7\linewidth}
    @{}
}
\toprule
\textbf{Symbol} & \textbf{Description} \\
\midrule

$x$, $x'$ 
& Microscopic phase-space points \\

$i,j$ 
& Cluster labels \\

$k$ 
& Trajectory label \\

$\tau$ 
& Lag time used throughout the analysis \\

$\ptau(x\mid x')$ 
& Underlying transition probability density \\

$\ptau^{A\to B}(x\mid x')$ 
& Transition probability density for the $A\to B$ ensemble \\

Superscript $X$ 
& Ensemble label: equilibrium, $A\to B$, or $B\to A$ \\

$\px(x)$ 
& Stationary probability density in ensemble $X$ \\

$\px(x\mid i)$ 
& Stationary density in ensemble $X$, conditioned on cluster $i$ \\

$\px_i$ 
& Stationary probability of cluster $i$ in ensemble $X$ \\

$\px$ 
& Vector of cluster stationary probabilities in ensemble $X$ \\

$\flux_{ij}^X$ 
& Probability flux from cluster $i$ to cluster $j$ over one lag interval \\

$\tlocatob$ 
& Mean first-passage time from $A$ to $B$ \\

$\qa(x)$ 
& Microscopic probability of reaching $A$ before $B$ \\

$\qa_i$ 
& Equilibrium-averaged committor to $A$ within cluster $i$ \\

$\qb(x)$ 
& Microscopic probability of reaching $B$ before $A$ \\

$\cij^{(k)}$ 
& Number of observed $i\to j$ transitions for trajectory $k$ \\

$\ci^{(k)}$ 
& Total number of transitions originating in cluster $i$ for trajectory $k$ \\

$\pinit(x\mid i)$ 
& Distribution of trajectory initial points within cluster $i$ \\

$T_{ij}$ 
& Cluster transition matrix, for example estimated from transition counts \\

$\txij$ 
& Ideal cluster transition matrix in ensemble $X$ \\

$\ind_i(x)$ 
& Indicator function of cluster $i$ \\

$\ptotatob$ 
& Normalization constant relating the equilibrium and $A\to B$ densities \\

$\tloc_i^B$ 
& Local mean first-passage time from cluster $i$ to target state $B$ \\

\bottomrule
\end{tabular}

\caption{Notation used throughout the paper.}
\label{tab:notation}
\end{table}


The following subsections introduce equilibrium and nonequilibrium stationary densities for the continuous-state system (Secs.~\ref{sec:equilibrium} and \ref{sec:nonequilibrium}).
We then define dynamical observables for the system, including the committor and mean first-passage time (Secs.~\ref{sec:continuous_committor} and \ref{sec:mfpt-micro}).

\subsection{Equilibrium density} \label{sec:equilibrium}

We assume the continuous-state system has an equilibrium density $\pequil$ that satisfies detailed balance,
\begin{equation}
    \pequil(x') \, \ptau(x|x') = \pequil(x) \, \ptau(x'|x),
    \label{det-bal}
\end{equation}
or the corresponding momentum-flip version of detailed balance if $x$ includes momenta or other variables that are odd under time reversal \cite{leimkuhler2015molecular} .
The equilibrium density is stationary with respect to the transition density $p_{\tau}$ because it satisfies the relation
\begin{equation}
    \int \pequil(x') \, \ptau(x|x') \, \mathrm{d}x' = \pequil(x).
    \label{pss-condn}
\end{equation}
Detailed balance implies stationarity, as we can see by integrating both sides of \eqref{det-bal} over $x'$, but the converse is not true. 
When a stationary distribution does not satisfy detailed balance, it is called a ``nonequilibrium'' steady state distribution (NESS).

\subsection{Nonequilibrium steady-state densities} \label{sec:nonequilibrium}

\begin{figure}[t]
    \centering
    \includegraphics[width=\linewidth]{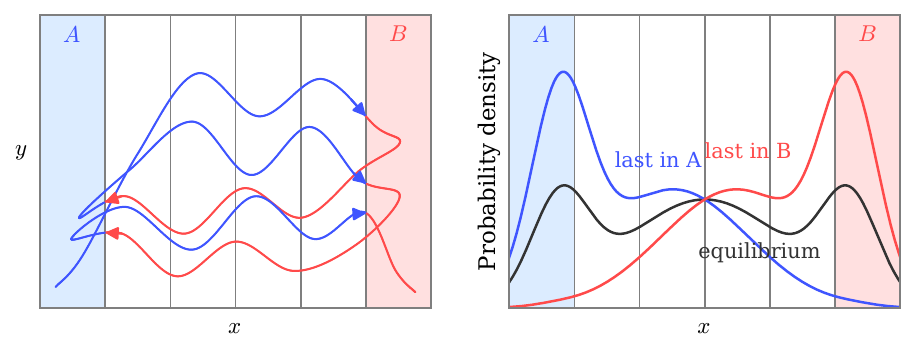}
    \caption{
    Decomposition of an equilibrium trajectory ensemble into two nonequilibrium steady-state ensembles.
    States $A$ and $B$ are the leftmost and rightmost boxes, and the vertical rectangles represent MSM clusters.
    Left: a long equilibrium trajectory is colored according to the most recently visited state.
    Blue segments belong to the $A \to B$ ensemble, while red segments belong to the $B \to A$ ensemble.
    Right: normalized probability densities for the equilibrium ensemble and the two directional ensembles. 
    }
    \label{fig:equil-decomp}
\end{figure}


Imagine a single trajectory is observed over a long time interval, long enough to contain many transitions between states $A$ and $B$.
The trajectory is in equilibrium throughout this extended period.
The trajectory can be traced backward in time until it reaches one of the states $A$ or $B$, and the trajectory is thus assigned to one of two classes, ``last in $A$'' or ``last in $B$'' (blue and red paths in Fig.\ \ref{fig:equil-decomp}).
Going forward in time, when the trajectory reaches the opposite state from its class, it changes class.

From this decomposition, we obtain two sets of trajectory segments, one for the $A \to B$ transitions and one for the $B \to A$ transitions.
In the $A \to B$ trajectory set, the segments all start at the source state $A$ and they all end at the sink state $B$.
The distribution of start points in $A$ is called the ``reactive entrance distribution''
\cite{lelievre2023hill-reactive} or the ``EqSurf'' distribution \cite{bhatt2011reversibility},
which we express
using the probability density $\rho_A(x)$.
As a convention, this paper chooses to include the endpoints in the $A \to B$ trajectory segments in order to simplify the description of the $A \rightarrow B$ recycling dynamics.

\begin{figure}
    \centering
    \includegraphics[width=.95\linewidth]{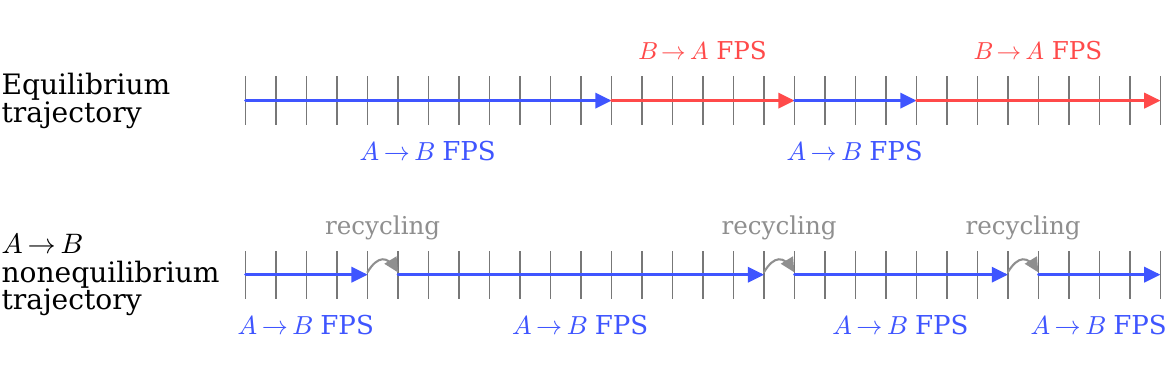}
    \caption{Long trajectories in the equilibrium and $A \to B$ nonequilibrium steady state ensembles.
    Both trajectories contain a sequence of $A \to B$ first passage segments (blue arrows).
    These are interrupted by $B \to A$ first passage segments (red arrows) in the equilibrium trajectory, or recycling segments (gray arrows) in the $A \to B$ nonequilibrium trajectory.  When $B \to A$ recycling is performed via the EqSurf distribution $\rho_A$, the distributions of the $A \to B$ segments are identical in the two ensembles.}
    \label{fig:long-traj}
\end{figure}

After extracting an infinite sequence of $A \to B$ trajectory segments, we randomly shuffle 
the segments and concatenate them to form a trajectory representing the $A \to B$ nonequilibrium ensemble, as illustrated in Fig.~\ref{fig:long-traj}.
In this ensemble, the transition probability density from one phase-space point $x'$ to the next phase-space point $x$ is given by the $A \to B$ recycling dynamics,
\begin{equation}
    \ptauatob(x|x') = \begin{cases} \ptau(x|x'), & x' \notin B, \\
    \rho_A(x), & x' \in B.
    \end{cases}
    \label{ptau-ab}
\end{equation}
The ensemble follows ordinary dynamics starting from any point $x' \notin B$, but from any $x' \in B$, the process is reset according to the EqSurf distribution $\rho_A$.

Last, we write $\patob(x)$ for the probability density function of phase-space points in the $A \to B$ nonequilibrium ensemble.
This density function is stationary with respect to the $A \to B$ recycling dynamics, namely,
\begin{equation}
    \int \patob(x') \, \ptauatob(x|x') \, \mathrm{d}x' = \patob(x).
\end{equation}
However, $\patob$ is not an equilibrium density because of the nonzero probability current.
Ordinary dynamics carry probability from $A$ to $B$, and the recycling dynamics return probability from $B$ to $A$.

\key{
The equilibrium ensemble decomposes into two nonequilibrium steady-state (NESS) ensembles, one for the $A \to B$ direction and another for $B \to A$ \cite{bhatt2011reversibility}.}

\subsection{Committor relates equilibrium to NESS} \label{sec:continuous_committor}

The committor $\qa(x)$ is the probability that trajectories starting from a phase-space point $x$ will visit state $A$ before they visit state $B$ \cite{gardiner1985book,vankampen2007book}.  Likewise $\qb$ is the analogous function in the opposite direction.  Since all trajectories eventually hit either $A$ or $B$, 
\begin{equation}
\qa(x) + \qb(x) = 1 
\label{q-symm}
\end{equation}
for every $x$.

We can relate the committor to the nonequilibrium steady-state density by considering a single long trajectory conforming to equilibrium.
Conditioned on the equilibrium trajectory passing through a phase-space point $x$, the probability that the forward trajectory next visits $A$ before $B$ is $\qa(x)$. 
Assuming detailed balance, we can reverse the trajectory in time to obtain a different sample from the equilibrium ensemble.
The trajectory segments next visiting state $A$ in the original trajectory correspond to trajectory segments that were last in $A$ in the reversed trajectory.
Therefore, the equilibrium density $\pequil(x)$ times the committor $\qa(x)$ is proportional to the nonequilibrium density $\patob(x)$ \cite{bolhuis2011relation,darve2013analysis},
\begin{equation}
\label{eq:to_rearrange}
    \frac{\pequil(x) \, \qa(x)}{\int_{B^c} \pequil(x') \, \qa(x')\,\mathrm{d}x'} = \frac{\patob(x)}{\int_{B^c} \patob(x')\,\mathrm{d}x'},
    \qquad x \notin B.
\end{equation}
The two sides are normalized by integrating over all points excluding those in $B$,
because the normalized $A \to B$ NESS density includes the sink endpoint in $B$, while $\qa(x)$ vanishes on $B$.
Rearranging \eqref{eq:to_rearrange} gives the likelihood ratio formula
\begin{equation}
\label{eq:likelihood_ratio}
    \qa(x) = \ptotatob\frac{\patob(x)}{\pequil(x)},
    \qquad x \notin B,
\end{equation}
where we have introduced a normalizing constant
\begin{equation}
    \ptotatob = \frac{\int_{B_c} \pequil(x') \qa(x')\,\mathrm{d}x'}{\int_{B^c} \patob(x')\,\mathrm{d}x'}.
\end{equation}
Because $\qa(x) = 1$ on $A$, integrating \eqref{eq:likelihood_ratio} over $A$ allows us to re-express the normalizing constant as
\begin{equation}
    \ptotatob = \frac{\int_A \pequil(x)\,\mathrm{d}x}{\int_A \patob(x)\,\mathrm{d}x}.
\end{equation}
Equivalently, wherever the densities are positive,
$\pequil(x)/\patob(x)=\ptotatob$ for $x \in A$.
These relations are derived using detailed balance.
Under momentum-flip detailed balance, a corresponding statement involves the committor $q^A$ evaluated for the momentum flipped phase-space point $x$.

To conclude, we emphasize the following key point that will allow us to estimate the coarse-grained committor without bias.
\key{
On the appropriate restricted domain, the committor is proportional to the likelihood ratio between the nonequilibrium and equilibrium steady state densities  \cite{darve2013analysis}.
}

\subsection{Mean first-passage times} \label{sec:mfpt-micro}

Another central quantity representing kinetics is the $A \to B$ mean first-passage time (MFPT), denoted by $\tlocatob$.
The $A \to B$ MFPT considered here is the mean duration of an $A \to B$ trajectory segment in the equilibrium ensemble, measured from its first sampled point in $A$ through its first sampled point in $B$.
Equivalently, the MFPT is the mean length of such a segment in the $A \to B$ nonequilibrium ensemble, which simply skips over any $B \to A$ trajectory segments by a lag-time step returning the trajectory from $B$ directly to $A$ (Fig.\ \ref{fig:long-traj}).

We now employ a variation of the Kac return-time argument \cite[Sec.~1.7]{norris1998markov} to derive a simple expression for the MFPT, assuming that the recycled process is stationary and ergodic.
In the $A \to B$ nonequilibrium ensemble, we observe that
each return cycle to $B$ consists of one lag-time step that recycles the trajectory from $B$ to $A$, followed by an $A \to B$ first-passage segment (Fig.\ \ref{fig:long-traj}), so the average time for a return cycle is
\begin{equation}
	\tau + \tlocatob.
\end{equation}
Because the recycled chain occupies $B$ at exactly one discrete observation per return cycle, the fraction of time spent visiting $B$ in the $A \to B$ nonequilibrium ensemble equals the lag time $\tau$ divided by the average time for a return cycle,
\begin{equation}
\label{eq:step1}
    \text{fraction of time visiting $B$} = \frac{\tau}{\tau + \tlocatob}.
\end{equation}
Yet the fraction of time spent visiting $B$ is also expressed by the $A \to B$ steady-state density of $B$, which is
\begin{equation}
\label{eq:step2}
    \text{fraction of time visiting $B$} = \int_B \patob(x)\,\mathrm{d}x
\end{equation}
By combining \eqref{eq:step1} with \eqref{eq:step2}, we obtain the expression for the MFPT,
\begin{equation}
\label{eq:hill}
    \tlocatob = \frac{\tau}{\int_B \patob(x)\,\mathrm{d}x} - \tau.
\end{equation}
This identity expresses the MFPT in terms of the stationary probability of the sink state in the source-sink steady state ensemble.



We can recast the expression \eqref{eq:hill} to make a connection to a rate constant.  We will use the indicator function for state $B$,
\begin{equation}
    \ind_B(x) = \begin{cases}
        1, & x \in B, \\
        0, & x \notin B.
    \end{cases}
\end{equation}
We apply the relation 
$\int \cdots \mathrm{d}x = \int_B \cdots \mathrm{d}x + \int_{B^c} \cdots \mathrm{d}x$ and the stationary condition $\int \patob(x') \,\ptauatob(x | x') \, \mathrm{d}x' = \patob(x)$ 
to rearrange \eqref{eq:hill} as
\begin{equation}
\begin{aligned}
    \frac{\tau}{\tlocatob}
    &= \frac{\int_B \patob(x)\,\mathrm{d}x}{\int_{B^c} \patob(x) \, \mathrm{d}x} \\
    &= \frac{\int \int \patob(x') \, \ptauatob(x | x')\,\ind_B(x) \, \mathrm{d}x \, \mathrm{d}x'}{\int_{B^c} \patob(x) \, \mathrm{d}x}.
    \label{hill-discrete}
\end{aligned}
\end{equation}
The numerator is the probability of starting outside of $B$ and entering $B$ on the next lag-time step, while the denominator is the probability of starting outside of $B$ in the $A \to B$ nonequilibrium ensemble.
Dividing by $\tau$,
we obtain the conditional entry probability into $B$ per unit time, which is a viable definition for the rate constant $\kab$ \cite{zuckerman2010book}.

Eq.~\eqref{eq:hill} is a discrete-time version of the Hill relation \cite{hill2013free}, which establishes the reciprocal flux relation in the continuous-time limit $\tau \to 0$.
Readers can also consult a related derivation for the continuous-time case \cite{zuckerman2015hill-cont,zuckerman2021gentle}.
We emphasize the following key point.

\key{The mean first-passage time $\tlocatob$ is the inverse rate constant for an $A \to B$ process, and it can be computed from the $A \to B$ NESS probability of the sink state,
\begin{equation}
    \tlocatob
    =
    \frac{\tau}{\int_B \patob(x)\,\mathrm{d}x}
    - \tau .
\end{equation}}

Last, we acknowledge a nuance in the MFPT definition provided here.
By our definition, the MFPT generally increases with the lag time, because coarser time sampling delays the first observed entry to the target state \cite{suarez2021markov-can-not}. 
Choosing an appropriate lag time is therefore a practical modeling choice.
The goal of capturing the \emph{first} passage into the target state suggests using the shortest possible lag.
However, unless state boundaries coincide with deep metastable basins, trajectories may rapidly recross the boundary, as emphasized in the transition state theory literature \cite{chandler1987introduction}.
The recrossing issue motivates using a longer lag time to measure transitions that remain in the target state rather than transient boundary crossings.
Different lag times define modified kinetic observables, so MSM estimates and reference simulations should always use the same convention \cite{suarez2021markov-can-not}.

\section{Unbiased coarse-grained observables from MSMs}
\label{sec:exact}

Markov state models (MSMs) are simplified models for the transitions among a set of clusters or ``states'', defined as non-overlapping regions fully covering the phase space.
There has been extensive discussion of the clustering methods for defining MSM states \cite{chodera-noe2014markov-review,pande2016optimized,vandeneijnden2016optimized}, so this perspective focuses on a different issue of unbiased observables for a given set of clusters.

Given the ideal trajectory data, we can define an MSM through a transition matrix $\tx = \tx(\tau)$, where each element $i,j$ describes the probability of observing the system in state $j$ following a time lag of $\tau$ after the system was observed in state $i$. This matrix can be written
\begin{equation}
    \txij = \int \pinit(x' | i) \int \ptau^{\textrm{X}}(x | x') \, \ind_j(x)\,\mathrm{d}x\,\mathrm{d}x'.
\label{tij-from-init}     
\end{equation}
Here, $\pinit(x' | i)$ denotes an initial distribution of phase-space points within each state $i$, and
$\ind_j$ is the indicator function on the state $j$, which satisfies
\begin{equation}
	\ind_j(x) = \begin{cases}
		1, & x \in j, \\
		0, & x \notin j.
	\end{cases}
\end{equation}
Last, $X$ represents the ensemble of interest, which could be the equilibrium ensemble or $A \to B$ nonequilibrium ensemble.
The matrix elements are fully determined by the distribution of initial points, along with the dynamics and boundary conditions embodied in $\ptau^X$.

Although MSMs are approximations obtained by coarse-graining in both space and time, 
remarkably they can express several coarse-grained observables at any fixed lag-time without bias, provided the MSM matrix elements are constructed from the correct stationary within-cluster distributions.
\begin{itemize}
\item[(i)] Equilibrium stationary probabilities are obtained from the equilibrium MSM $\tequil$.
\item[(ii)] $A \to B$ stationary probabilities are obtained from the $A \to B$ nonequilibrium MSM $\tatob$.
\item[(iii)] The $A \to B$ MFPT is also obtained from $\tatob$.
\item[(iv)] The $A$-committor is obtained from a combination of $\tequil$ and $\tatob$.
\end{itemize}
The rest of this section will describe these procedures in detail.


\subsection{Stationary probabilities}

To evaluate unbiased equilibrium stationary probabilities,
we can generate the equilibrium MSM with the following matrix elements.
\begin{equation}
	\tequilij = \int \pequil(x'|i) \int \ptau(x | x') \, \ind_j(x)\, \mathrm{d}x \, \mathrm{d}x'.
\label{tij-equil}
\end{equation}
Here, $\pequil(x | i)$ denotes the equilibrium density conditioned on cluster $i$,
\begin{equation}
	\pequil(x | i) = \frac{\ind_i(x) \, \pequil(x)}{ \ppequil_i }.
    \label{p-equil-i}
\end{equation}
Its normalization factor is the equilibrium stationary probability for cluster $i$
\begin{equation}
	\ppequil_i = \int I_i(x)\, \pequil(x)\,\mathrm{d}x.
\end{equation}
We can check that when we solve for the stationary distribution of the transition matrix $\tequil$ defined in \eqref{tij-equil}, we recover the equilibrium stationary probabilities $\ppequil_i$, regardless of the lag time $\tau$.
\begin{equation}
\label{equil-check}
\begin{aligned}
	\sum_i \ppequil_i \, \tequilij
	&= \int \pequil(x') \int \ptau(x | x') \, \ind_j(x) \, \mathrm{d}x \, \mathrm{d}x' \\
     &= \int I_j(x)\, \pequil(x)\,\mathrm{d}x = \ppequil_j  \hspace{0.5cm} \hbox{for any }\tau.
\end{aligned}
\end{equation}
The first equality holds because
\begin{equation}
    \sum_i \ppequil_i \pequil(x'|i)=\pequil(x'),
\end{equation}
which follows from the definition \eqref{p-equil-i} and the non-overlapping clusters covering phase space.
The second equality of \eqref{equil-check} holds because of the stationary condition, $\int \pequil(x') \, \ptau(x | x') \, \mathrm{d}x' = \pequil(x)$.

In exact analogy to equilibrium stationary probabilities, we can recover the nonequilibrium stationary probabilities from an approriate MSM model.
We form the $A \to B$ NESS MSM with elements given by
\begin{equation}
	\tatobij = \int \patob(x'|i) \int \ptau^{A \to B}(x | x') \ind_j(x) \, \mathrm{d}x \, \mathrm{d}x'.
\label{tij-atob}     
\end{equation}
The initial points in each cluster $i$ are sampled from the $A \to B$ NESS density conditioned on cluster $i$,
\begin{equation}
	\patob(x | i) = \frac{\ind_i(x) \, \patob(x)}{\ppatob_i}.
\label{patobi}
\end{equation}
Its normalization factor is
\begin{equation}
	\ppatob_i = \int I_i(x)\,\patob(x)\,\mathrm{d}x.
\label{ppatob-i}
\end{equation}
Similar to \eqref{equil-check}, the stationary vector of the transition matrix $\tatobij$ yields the probabilities $\ppatob_i$,
\begin{equation}
    \sum_i \ppatob_i \, \tatobij = \ppatob_j.
    \label{ss-atob}
\end{equation}

This approach for calculating stationary probabilities may seem circular, but it underpins important practical algorithms as we will see in Sec.~\ref{sec:algo}.
The failure to reckon with the proper construction of the equilibrium or nonequilibrium MSM by using \emph{ad hoc} initial distributions will inevitably lead to biased results, not only for stationary probabilities but for any observables computed from these matrices.

\key{If matrix elements are constructed based on an initial distribution of phase points exactly corresponding to stationarity, then the stationary solution of the MSM equals the coarse-grained stationary distribution.
If different within-cluster initial distributions are used, this equality is no longer guaranteed, and stationary observables can be biased.}

\subsection{Cluster-to-cluster fluxes}

In a discretized description of phase space, a fundamental mechanistic quantity is the flux $\fij^X = \fij^X(\tau)$, which describes the directed probability movement from $i \to j$ occurring over lag time $\tau$.
The flux is defined as
\begin{equation}
	    \fij^X = \int I_i(x')\, \pi^X(x') \int \ptau^X(x | x') \,
        \ind_j(x) \,\mathrm{d}x \, \mathrm{d}x'.
\end{equation}
Here, as usual, $X$ represents either the equilibrium ensemble or the $A \to B$ nonequilibrium ensemble.
In both cases, the flux can be computed from the matched MSM matrix and its stationary probabilities, via
\begin{equation}
    \fij^X = \pp^X_i \, \tij^X.
\label{flux-genl}
\end{equation}
This MSM calculation is unbiased provided the correct coarse-grained stationary density $\pi_i^X$ and transition matrix $\tij^X$ are used.

\key{
The cluster-to-cluster flux $\fij^X$ represents the exact amount of probability moving from cluster $i$ to $j$ in ensemble $X$,
and it can be computed from the matched MSM and its stationary probabilities, using $\fij^X = \pp^X_i \, \tij^X$.}  

\subsection{Mean first-passage time}\label{sec:mfpt-coarse}

The $A \to B$ mean first-passage time quantifies the inverse rate constant for an $A \to B$ process, as discussed in Sec.~\ref{sec:mfpt-micro}.
Here we show how to compute the $A \to B$ MFPT without bias from an MSM model.
Applying the discrete-time Hill relation \eqref{eq:hill}, 
\begin{equation}
    \tlocatob
    =
    \frac{\tau}{\int_B \patob(x)\,\mathrm{d}x}
    - \tau .
\end{equation}
Here and throughout, we assume that the union of one or more MSM clusters equals the target state $B$.
We can rewrite the MFPT expression using the coarse-grained $A \to B$ stationary probability for the $B$ state,
\begin{equation}\label{eq:Hill_discrete}
    \tlocatob
    =
    \frac{\tau}{\sum_{i \in B} \patob_i}
    - \tau.
\end{equation}

We emphasize the following key point.

\key{
The exact $A \to B$ MFPT for lag time $\tau$ is determined by the $A\to B$ NESS stationary probability of the $B$ state.
\begin{equation}
    \tlocatob = \frac{\tau}{\sum_{i \in B} \patob_i} - \tau .
\end{equation}
This identity is unbiased when $\patob_b$ is obtained from the correctly constructed $A\to B$ NESS transition matrix because the stationary probability itself is unbiased.}

\subsection{Committor}
\label{sec:comm-discrete}

To define the proper coarse-graining for the committor, we are guided by the long trajectory perspective.
In a long trajectory, points in a given coarse cluster will be equilibrium-distributed, so we define
the coarse-grained committor $\qa_i$ as
\begin{equation}
\label{eq:committor_defn}
\begin{aligned}
	\qa_i &= \int \pequil(x | i) \, \qa(x) \, \mathrm{d}x \\
	&= \frac{\int I_i(x)\, \pequil(x) \, \qa(x) \, \mathrm{d}x}{\int I_i(x)\, \pequil(x) \, \mathrm{d}x}.
\end{aligned}
\end{equation}
In the second line, we have used the explicit formula for the equilibrium density conditioned on cluster $i$.
We can generate an unbiased estimate for the coarse-grained committor by using the likelihood ratio formula which was derived in Sec.~\ref{sec:continuous_committor},
\begin{equation}
\label{eq:substitute_me}
    \qa(x) = \ptotatob\frac{\patob(x)}{\pequil(x)},
    \qquad x \notin B.
\end{equation}
For any cluster $i$ outside of the sink state $B$, substituting \eqref{eq:substitute_me} into \eqref{eq:committor_defn} gives
\begin{equation}
	\qa_i  = \ptotatob \frac{\patob_i}{\pequil_i}.
\end{equation}
For any cluster $a$ in the source state $A$, the condition $\qa_a = 1$ leads to an explicit formula for the normalizing constant,
\begin{equation}
    \ptotatob = \frac{\pequil_a}{\patob_a}.
\end{equation}
We can use these formulas to obtained unbiased estimates of the coarse-grained committor and we close with the following key point.

\key{In analogy to the continuous-state case, the coarse-grained committor is the normalized ratio of steady-state populations between the $A \to B$ NESS and equilibrium.}

\subsection{Traditional MSM observables are biased}
\label{sec:first-algo}

The traditional MSM formulas for the mean first-passage time and committor are exact for a Markov chain on the MSM states, but they are generally biased when the MSM states are coarse-grained regions of a continuous-state molecular system.
These formulas are inconsistent with the unbiased formulas in Secs.~\ref{sec:mfpt-coarse} and \ref{sec:comm-discrete}.

The traditional MSM formula for the $A \to B$ MFPT is based on a first-step relation that holds under Markovianity.
If the lag-$\tau$ dynamics are fully described by a Markov transition matrix with elements $\tij$, then the local MFPT to $B$ is given by the following formula \cite{privault2018understanding}.
\begin{equation}
    \label{first-step-mfpt}
    \begin{cases}
    \tloc_i^B = \sum_j \tij \, \tloc_j^B + \tau,
    & i \notin B, \\
    \tloc_i^B = 0, & i \in B,
    \end{cases}
    \hspace{1cm} \mbox{(Unbiased assuming Markov property).}
\end{equation}
This means the local MFPT from a given state $i \notin B$ is the weighted average of the MFPTs among the subsequently visited states, plus one lag time.
A global $A \to B$ MFPT is then obtained by averaging these local MFPTs over the first entry distribution on $A$.

In direct analogy, the traditional MSM formula for the coarse-grained committor is also given by a first-step relation that holds under the Markov property.
The Markov property implies that the coarse-grained committor satisfies
\begin{equation}
    \begin{cases}
    \qa_i = \sum_j \tij \, \qa_j,
    & i \notin A, \, i \notin B, \\
    \qa_i = 1, & i \in A, \\
    \qa_i = 0, & i \in B,
    \end{cases}
    \hspace{1cm} \mbox{(Unbiased assuming Markov property).}
    \label{first-step-comm}     
\end{equation}
In other words, the probability to reach state $A$ before $B$ from a state $i$ outside of $A$ or $B$ must be preserved when averaging over all possible one-step outcomes.

Both the relations \eqref{first-step-mfpt} and \eqref{first-step-comm} are systematically biased, regardless of how much trajectory data is used to estimate the transition matrix, because the projected dynamics on coarse-grained molecular states are generally not Markovian.
The average observable in the downstream state $j$ does not necessarily describe all future behavior from $j$, because different phase-space regions within $j$ can lead to different outcomes.
Secs.~\ref{sec:limitations_MFPT} and \ref{sec:comm-error} will analyze the error in the first-step relations and provide formulas indicating that this error might be large.

\section{Trajectory reweighting}
\label{sec:algo}

The preceding section has explained how to use MSMs to compute unbiased observables, given infinite data that is distributed according to the appropriate stationary distribution.
However, in practice we never have perfect data, so this section reviews ways to reweight trajectory data to construct more accurate MSMs.

\subsection{Baseline method with simple counts}

First consider a baseline method that constructs the MSM transition matrix based on simple counts.
Suppose we have access to a data set of trajectories $x_k$, for $k = 1, 2, \ldots, K$.
Each trajectory is observed for $N_k$ time steps,
\begin{equation}
	x_k(0), x_k(\tau), \ldots, x_k(N_k \tau).
\end{equation}
For each $k = 1, 2, \ldots, K$, we define a transition count matrix $C^{(k)}$ with elements
\begin{equation}
	\cij^{(k)} = \sum_{n=1}^{N_k} \ind_i(x_k(n\tau - \tau)) \, \ind_j(x_k(n\tau )).
\end{equation}
We also define a cluster occupancy vector with elements determined from $C^{(k)}$ as follows.
\begin{equation}
\begin{aligned}
	\ci^{(k)} &= \sum_j C^{(k)}_{ij} \\
	&= \sum_{n=0}^{N_k-1} \ind_i(x_k(n\tau)).
\end{aligned}
\end{equation}
The simplest way we can compute entries of an MSM is by dividing the sum of the transition counts over all the trajectories by the sum of the cluster occupancies,
\begin{equation}
\label{tij-counts}
	\tij = \frac{\sum_{k=1}^K \cij^{(k)}}{\sum_{k=1}^K \ci^{(k)}}.
\end{equation}
In this simple-counts approach, each observed lag-time transition is weighted equally, so longer trajectories generally make the largest contributions to the sums.

In several idealized settings, the simple-counts approach can produce asymptotically unbiased MSMs as the number of trajectories $K$ goes to infinity.
\begin{itemize}
\item The simple-counts estimator is unbiased for the equilibrium MSM when the trajectory start points are sampled from the equilibrium density $\pequil$.
\item The estimator is unbiased for the \(A \to B\) nonequilibrium MSM when the trajectory start points are sampled from the \(A \to B\) nonequilibrium density $\patob$ and the transitions follow the $A \to B$ recycling dynamics.
\item Last, the simple-counts estimator is unbiased for \(\tequil\) or \(\tatob\) when the start points of the counted transitions are sampled from the correct within-cluster distributions \(\pequil(x | i)\) or \(\patob(x | i)\), and each trajectory consists of a single lag-time transition according to $\ptau$ or $\ptau^{A \to B}$.
\end{itemize}

In contrast, the simple-counts estimator becomes biased when initial points are not drawn from the target stationary density and many transitions are counted.
The downstream transition origins are distributed according to the relaxation of the initial ensemble, rather than the desired stationary density.

In the typical MD workflow, researchers have access to one or more long trajectories, some of which pass through the $A$ and $B$ states of interest.
Researchers then apply simple counts to the full data set of trajectories to approximate the matrix $\tequil$.
As a follow-up step, some researchers subset the data to include only trajectory segments whose most recent visit was to $A$, and they terminate each segment upon hitting $B$.
They apply simple counts to the subsetted data and enforce the recycling dynamics from the sink $B$ to the source $A$, resulting in a ``history-augmented MSM'' (haMSM)  \cite{suarez2016nonmarkov,copperman2020hamsm,suarez2021markov-can-not} that approximates the matrix $\tatob$.

While the simple counts approach is powerful in idealized settings, it has limitations in typical MD applications.
First, because full trajectories are counted, the start points must be sampled from the target stationary density for unbiased estimation.
However, in practice the start points are only partially equilibrated, so finite trajectories retain initialization bias, especially along the slow relaxation modes.
Second, the haMSM approach requires tracing trajectories backward to the last visit to $A$ or $B$, which is not always possible and can reduce the amount of usable data.
To address these issues, the rest of this section will discuss alternative strategies that reweight the available trajectory data so it better matches the target equilibrium or nonequilibrium distribution.

\subsection{Trajectory reweighting based on static clustering}

The deviation from stationarity in typical data sets \cite{vitalis2019removal} has motivated efforts to remove initial-distribution bias from MSM models.
Wan \& Voelz \cite{voelz2020reweight} developed the first proposal for reweighting trajectories to improve consistency with a target stationary distribution $\pp$.
They were inspired by the ideal weighted sampling approach, in which each trajectory $k = 1, \ldots, K$ receives a weight $w_k$ so that the weighted distribution of start points equals the target distribution $\pp$.
Given an increasing quantity of perfectly weighted data, the weights and trajectory start points yield
\begin{equation}
\label{eq:exact}
	\sum_{k=1}^K w_k \ind_i(x_k(0)) = \pp_i,
    \qquad \text{for each cluster $i$}.
\end{equation}
In practice, we do not know the exact stationary probabilities $\pp_i$, but we can check the related consistency condition
\begin{equation}
\label{eq:to_approximate}
    \sum_{k=1}^K w_k \ind_i(x_k(0)) = \pp_i^{\rm MSM},
    \qquad \text{for each cluster $i$}.
\end{equation}
Here $\pp_i^{\rm MSM}$ is the stationary probability for cluster $i$ calculated from the MSM model.
Violations of \eqref{eq:to_approximate} may indicate
initial-distribution bias and can contribute to biased MSM estimates.

To improve correspondence with the condition \eqref{eq:to_approximate}, Wan \& Voelz first calculated the stationary solution $\pi^{\rm MSM}$ for an initial, standard MSM.
Then they assigned a uniform weight to each trajectory beginning in cluster $i$,
\begin{equation}
\label{eq:wt-voelz}
	w_k = \frac{\pp_i^{\rm MSM}}{\sum_{\ell=1}^K \ind_i(x_\ell(0))},
	\qquad \text{if } x_k(0) \in i.
\end{equation}
Last, they redefined the MSM transition matrix via
\begin{equation}
    \tij = \frac{\sum_{k=1}^K w_k \, \cijk}{\sum_{k=1}^K w_k \, \cik}.
    \label{tij-voelz}
\end{equation}
Wan \& Voelz initially applied their trajectory reweighting approach in the equilibrium setting \cite{voelz2020reweight}.
However, the same approach can be used to approximate the nonequilibrium $A \to B$ transition matrix when the trajectories follow the appropriate source-sink recycling dynamics.

As a limitation, however, the original Wan \& Voelz algorithm was not fully self-consistent, since the matrix elements defined in \eqref{tij-voelz} generally result in a new stationary solution that is no longer consistent with the assigned trajectory weights.
Therefore, Russo et al. \cite{russo2020iterative} proposed an iterative version of the algorithm
that repeatedly updates the weights defined in \eqref{eq:wt-voelz} after computing each new stationary solution from the weighted matrix elements \eqref{tij-voelz}.
At convergence, the iterated approach produces weights and stationary probabilities that satisfy \eqref{eq:to_approximate} for the chosen static clustering.

These early trajectory reweighting algorithms \cite{voelz2020reweight,russo2020iterative} were primarily intended for long trajectories.
For a single-lag trajectory, the trajectory weight multiplies both the numerator and denominator of one row of the transition matrix and therefore cancels.
Therefore, trajectory reweighting results in no change to the computed MSM matrix entries.
Reweighting only begins to affect the MSM matrix as the number of lag-time steps increases, since it changes the computed matrix entries for the downstream bins visited by the trajectories.
Given finite trajectory length, reweighting based on static bins cannot fully correct the distribution of trajectory start points within each cluster.

\key{Reweighting based on static clusters is insufficient for unbiased local sampling with finite trajectories, especially when the trajectories are short.}

\subsection{Trajectory reweighting with random clusters}

The early algorithms for trajectory reweighting \cite{voelz2020reweight,russo2020iterative} were limited due to their reliance on static clusters.
In contrast, the motivating idea behind the randomized iterative reweighting (RiteWeight) approach \cite{kania2026randomized} is to randomly change the cluster definition at each iteration and then identify a new stationary solution of the MSM transition matrix
\begin{equation}\label{tij-riteweight}
    \tij = \frac{\sum_{k=1}^K w_k \, \cijk}{\sum_{k=1}^K w_k \, \cik}.
\end{equation}
The algorithm then enforces the consistency condition \eqref{eq:to_approximate} by updating the weight of each trajectory beginning in cluster $i$ according to
\begin{equation}
\label{reweight-eq}
	\wtnew_k = 
	\frac{\pp_i^{\rm MSM}}{\sum_{\ell = 1}^K \ind_i(x_{\ell}(0)) w_{\ell}} w_k,
    \qquad \text{if } x_k(0) \in i.
\end{equation}
Here $w_k$ is the previous weight of the $k$th trajectory, $\wtnew_k$ is the new weight, and $\pp_i^{\rm MSM}$ is the current iteration's estimate of the stationary probability for cluster $i$, based on $T$.
The update rule \eqref{reweight-eq} preserves the relative weights of trajectories starting in each cluster during a single iteration, but repeated re-clustering allows trajectories that were previously grouped together to receive different relative weights in later iterations.
In this way, the algorithm iteratively improves the consistency between the trajectory weights and the stationary vectors of the weighted MSMs.
We can apply this approach to either an equilibrium distribution or an $A \to B$ steady-state distribution, by imposing the appropriate boundary conditions at the sink and removing any segments violating the boundary conditions.


The iterative algorithm described above can be applied to trajectories of any finite length.
However, for RiteWeight, long trajectories
are divided into segments containing one lag interval, allowing each counted transition origin to receive its own weight rather than tying its weight to all transitions in the parent trajectory.
This increases the flexibility of the reweighting, although it does
not add new dynamical information.

RiteWeight makes the greatest impact when the empirical distribution of transition origins is far from stationary under the target dynamics.
This occurs, for example, when a large number of short trajectories begin from a strongly biased distribution, or when data generated in one ensemble is reweighted to a different target ensemble, such as an $A\to B$ nonequilibrium steady state \cite{kania2026randomized}.
By contrast, RiteWeight makes little correction when the single-lag segments from long equilibrium trajectories are used to estimate equilibrium.
With long equilibrium tajectories, the consistency condition \eqref{eq:to_approximate} is automatically nearly satisfied with uniform weights, so there is little empirical inconsistency for RiteWeight to
correct \cite{zuckerman2026msmcounts}.

The  recent study \cite{kania2026randomized} shows that RiteWeight is asymptotically unbiased under idealized settings, and it empirically reduces the bias in equilibrium and nonequilibrium MSMs.
Further analysis and best practices for RiteWeight are actively under development, and practical applications often use time-averaging \cite{kania2026randomized} or
smoothing \cite{otten2026regularized} of the weights produced by the algorithm.
There is also an extended version of RiteWeight that can analyze trajectory segments generated from biased transition densities \cite{aristoff2026briteweight}.

\key{By repeatedly changing the clustering, RiteWeight approaches
self-consistency across a rich family of coarse-grainings rather than for
one fixed state decomposition.
The method can use nearly all available trajectory data of any length, and it can target either equilibrium or $A \to B$ steady state.}

\section{Theoretical error analysis of standard MSMs}
\label{sec:error-theory}

This section analyzes the error in traditional MSM observables due to the failure of Markovianity or incorrect local sampling within clusters.
We will derive exact error expressions for stationary probabilities, mean first-passage times, and committors.

\subsection{Analysis of stationary probabilities} \label{sec:distrlimits}

First we analyze the error when approximating a coarse-grained stationary probability vector, such as $\pequil$ or $\patob$.
The size of the error depends on how well the MSM transition matrix $\tmat$ approximates the target transition matrix $\tequil$ or $\tatob$.
The following proposition makes this connection precise.

\begin{proposition}[Errors in stationary probabilities] \label{prop:stationary}
	Let $\px$ be the stationary vector for a target MSM transition matrix $\tx$, where $X$ could indicate the equilibrium or nonequilibrium $A \rightarrow B$ ensemble, and let $\pp$ be the stationary vector for a computed MSM transition matrix $\tmat$.
    We assume $\tx$ is irreducible and aperiodic.
    Then the errors in the stationary probabilities are given by
	\begin{equation}
    \label{eq:stationary_error}
		\pp^\top - (\px)^\top = \pp^\top (\tmat - \tx) M^X
	\end{equation}
    where $M^X$ denotes the matrix
    \begin{equation}
        M^X = \left[ I - \tx + 1 (\px)^\top \right]^{-1}.
    \end{equation}
\end{proposition}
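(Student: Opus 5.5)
The plan is to derive the identity algebraically from the two stationarity equations and then check that $M^X$ is well defined. First, I will record the facts available: $\pp^\top \tmat = \pp^\top$, $(\px)^\top \tx = (\px)^\top$, and both vectors are probability vectors. In particular $\pp^\top 1 = (\px)^\top 1 = 1$, and $\tx 1 = 1$ because $\tx$ is row-stochastic.

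Next I will verify the identity in the equivalent form
\begin{equation*}
\bigl(\pp^\top - (\px)^\top\bigr)\bigl[I - \tx + 1 (\px)^\top\bigr] = \pp^\top(\tmat - \tx),
\end{equation*}
and then multiply on the right by $M^X$. To expand the left side, I compute three pieces.
\begin{itemize}
\item[(a)] $\pp^\top - \pp^\top \tx + \pp^\top 1 (\px)^\top = \pp^\top - \pp^\top\tx + (\px)^\top$.
\item[(b)] $(\px)^\top - (\px)^\top\tx + (\px)^\top 1 (\px)^\top = (\px)^\top - (\px)^\top + (\px)^\top = (\px)^\top$.
\item[(c)] Subtracting (b) from (a) gives $\pp^\top - \pp^\top\tx$. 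Using $\pp^\top = \pp^\top \tmat$, this equals $\pp^\top(\tmat - \tx)$.
\end{itemize}
This is exactly the right side, so the identity holds once the inverse is known to exist.

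The main step is showing that $I - \tx + 1(\px)^\top$ is invertible. Suppose $[I - \tx + 1(\px)^\top]v = 0$ for some vector $v$.
\begin{itemize}
\item[(a)] Multiply on the left by $(\px)^\top$. Stationarity gives $(\px)^\top v - (\px)^\top v + (\px)^\top v = 0$, so $(\px)^\top v = 0$.
\item[(b)] The equation then reduces to $\tx v = v$.
\item[(c)] Since $\tx$ is irreducible and stochastic, the eigenvalue $1$ is simple (Perron--Frobenius), and its right eigenvectors are multiples of $1$. So $v = c1$ for some constant $c$.
\item[(d)] Then $(\px)^\top v = c = 0$, so $v = 0$.
\end{itemize}
Irreducibility alone suffices for this step; aperiodicity is not needed for invertibility, though it guarantees convergence if one also wants the series representation of $M^X$. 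The stationary vector $\pp$ of $\tmat$ need not be unique: the identity holds for any stationary probability vector of $\tmat$.
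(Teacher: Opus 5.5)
Your proof is correct and follows the paper's argument: both verify $(\pp - \px)^\top\bigl[I - \tx + 1(\px)^\top\bigr] = \pp^\top(\tmat - \tx)$ from the two stationarity relations and normalization, then right-multiply by $M^X$. You expand the two products separately where the paper uses $(\pp-\px)^\top 1 = 0$, and your Perron--Frobenius check that $I - \tx + 1(\px)^\top$ is invertible fills in a step the paper takes for granted; you are right that irreducibility alone suffices there, with aperiodicity needed only for the series representation of $M^X$.
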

\begin{proof}
First make the calculation
\begin{equation}
\begin{aligned}
    (\pp - \px)^\top (I - \tx + 1(\px)^\top) 
    &= (\pp-\px)^\top(I-\tx) \\
    &= \pp^\top-\pp^\top\tx \\
    &= \pp^\top(\tmat-\tx).
\end{aligned}
\end{equation}
where the first equality uses
$(\pp-\px)^\top 1 = 0$, and the remaining equalities use
$(\px)^\top\tx=(\px)^\top$ and
$\pp^\top\tmat=\pp^\top$.
Multiply the left- and right-hand sides by $M^X$ to confirm
\begin{equation}\label{eq:pidifference1}
    \pp^\top - (\px)^\top = \pp^\top (\tmat - \tx) M^X.
\end{equation}
This completes the proof.
\end{proof}

In the error expression \eqref{eq:stationary_error}, the stationary probabilities are generally incorrect if the computed matrix $\tmat$ fails to match the target matrix $\tx$, which might result from misdistributed initial data.
The errors $\tmat - \tx$ are then inflated by a fundamental matrix
\begin{equation}
\begin{aligned}
    M^X &= \left[ I - \tx + 1 (\px)^\top \right]^{-1} \\
    &= I + \sum_{s=1}^{\infty} \left[(\tx)^s - 1 (\px)^\top\right]. \label{eq:neumann}
\end{aligned}
\end{equation}
The $i, j$ element of this matrix quantifies the expected number of visits to cluster $j$ in excess of an equilibrium value when a trajectory is started from cluster $i$, for the Markov model defined by $\tx$.
This implies that errors in the transition matrix are most significant when they inject probability into slowly relaxing modes or metastable regions.

As a simple corollary, we can use
Prop.~\ref{prop:stationary} to analyze the error in the
cluster-to-cluster flux $\fij^X = \px_i \tx_{ij}$ when the computed flux is
\begin{equation}
    \fij = \pp_i \tmat_{ij}.
\end{equation}
Element-wise, the error is
\begin{equation}
    \Phi_{ij}-\Phi^X_{ij}
    =
    \pp_i(\tij-\tx_{ij})
    +
    (\pp_i-\px_i)\tx_{ij}.
    \label{eq:flux-error-elementwise}
\end{equation}
The first term is a direct contribution from the error in the computed $i \to j$ transition probability.
The second term results from the induced error in the stationary probability for cluster $i$.
By Prop.~\ref{prop:stationary}, this
stationary-probability error can be amplified by the slow relaxation
timescales encoded in the fundamental matrix $M^X$.

\subsection{Analysis of mean first-passage times} \label{sec:limitations_MFPT}

This section analyzes the error in the $A \to B$ MFPT which is estimated using the first-step relation discussed in Sec.~\ref{sec:first-algo}.
The computed MFPT is correct when the first-step relation is applied with the $A \to B$ nonequilibrium transition matrix $\tatob$.
It is generally incorrect when it is applied to any other matrix, the resulting MFPT estimate, and Prop.~\ref{prop:mfpt} quantifies the size of the error.
This analysis reveals a potentially important source of bias with the traditional MSM approach that estimates the MFPT from an approximation of the \emph{equilibrium} transition matrix $\tequil$, instead of the $A \to B$ NESS matrix.

\begin{proposition}[Error in mean first-passage time] \label{prop:mfpt}
Let $\tau^{A \rightarrow B}$ be the exact $A \to B$ MFPT, and let $T$ be a computed MSM transition matrix, where state $A$ is represented by a single cluster $a$.
Let $\tloc_a$ be the estimated $A \to B$ MFPT from the first-step relation
\begin{equation}
    \begin{cases}
        \tloc_i = \sum_j \tmat_{ij} \tloc_j + \tau, & i \notin B, \\
        \tloc_i = 0, & i \in B.
    \end{cases}
\end{equation}
Then the error in the $A \to B$ MFPT is given by
\begin{equation}
    \tloc_a - \tlocatob
    = e_a^\top (I - \tilde{\tmat}^{A \to B})^{-1} (\tilde{\tmat} - \tilde{\tmat}^{A \to B}) \tilde{\tloc}.
\end{equation}
Here, $e_a$ is the standard basis vector associated with cluster $a$. $\tilde{\tmat}^{A \to B}$, $\tilde{\tmat}$, and $\tilde{\tloc}$ denote submatrices or subvectors of $\tmat$, $\tmat^{A \to B}$, and $m$ after removing all entries corresponding to clusters $i \in B$.
\end{proposition}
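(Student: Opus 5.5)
Two facts drive the argument. First, the exact MFPT $\tlocatob$ satisfies the same first-step relation with the NESS matrix $\tatob$ in place of $\tmat$. Second, the difference of two such linear systems telescopes through the resolvent $(I-\tilde{\tmat}^{A\to B})^{-1}$, in the same spirit as the perturbation identity in Prop.~\ref{prop:stationary}. The plan is to write both MFPT vectors as solutions of linear systems on the non-$B$ clusters and subtract.

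The first step is to establish that the vector $\tilde{\tloc}^{A\to B}$, defined by
\begin{equation*}
\tilde{\tloc}^{A\to B} = \tau (I - \tilde{\tmat}^{A\to B})^{-1}\mathbf{1},
\end{equation*}
satisfies $e_a^\top \tilde{\tloc}^{A\to B} = \tlocatob$. I would not argue that the coarse dynamics are Markov, since they are not. Instead I would use the Hill relation \eqref{eq:Hill_discrete} and the stationarity \eqref{ss-atob} of $\ppatob$ under $\tatob$.

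Restricted to non-$B$ clusters, stationarity reads
\begin{equation*}
\tilde\pi^\top (I-\tilde{\tmat}^{A\to B}) = \Big(\sum_{b\in B}\ppatob_b\Big)\, e_a^\top ,
\end{equation*}
because every row of $\tatob$ indexed by $b\in B$ equals $e_a^\top$, as recycling returns all of $B$ to the single cluster $a$. Multiplying on the right by $\tau(I-\tilde{\tmat}^{A\to B})^{-1}\mathbf{1}$ gives
\begin{equation*}
\tau\sum_{i\notin B}\ppatob_i = \Big(\sum_{b\in B}\ppatob_b\Big)\, e_a^\top \tilde{\tloc}^{A\to B}.
\end{equation*}
Dividing by $\sum_{b\in B}\ppatob_b$ yields $\tau/\sum_{b\in B}\ppatob_b - \tau$, which is exactly $\tlocatob$ by \eqref{eq:Hill_discrete}. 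This works only for the $a$-entry, which is all the statement needs.

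Invertibility of $I-\tilde{\tmat}^{A\to B}$ follows because $B$ is reachable from every non-$B$ cluster under irreducibility. I expect this identification of $e_a^\top\tilde{\tloc}^{A\to B}$ with the true MFPT to be the main obstacle. The key observation is that it is an averaged identity at the source cluster only, not a statewise first-step relation.

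The second step is algebraic. The computed vector satisfies $(I-\tilde{\tmat})\tilde{\tloc} = \tau\mathbf{1}$, since entries $\tloc_j=0$ for $j\in B$ drop out of the sum. Subtracting the two systems gives
\begin{equation*}
(I-\tilde{\tmat}^{A\to B})(\tilde{\tloc}-\tilde{\tloc}^{A\to B}) = (\tilde{\tmat}-\tilde{\tmat}^{A\to B})\tilde{\tloc}.
\end{equation*}
Applying $e_a^\top(I-\tilde{\tmat}^{A\to B})^{-1}$ to both sides, and then using the first step, produces the claimed formula.
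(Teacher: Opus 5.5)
Your proposal is correct and follows essentially the same route as the paper. Both proofs identify $m^{A\to B}$ with the $a$-entry of the first-step solution for $T^{A\to B}$ via the discrete Hill relation, then subtract the two linear systems on the non-$B$ clusters and apply $e_a^\top (I-\tilde{T}^{A\to B})^{-1}$. The only difference is in how the identification is justified: the paper cites Kac's return-time argument for the chain $T^{A\to B}$, while you verify $e_a^\top \tau (I-\tilde{T}^{A\to B})^{-1}\mathbf{1} = \tau/\sum_{b\in B}\pi^{A\to B}_b - \tau$ directly from stationarity restricted to the non-$B$ columns, using that the $B$-rows of $T^{A\to B}$ equal $e_a^\top$. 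That is a self-contained algebraic version of the same step.
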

\begin{proof}
It was shown in Sec.~\ref{sec:mfpt-coarse} that the $A \to B$ MFPT for the continuous-state system satisfies
\begin{equation}
    \tlocatob = \frac{\tau}{\sum_{i \in B} \patob_i}-\tau.
\end{equation}
By the Kac return-time argument from Sec.~\ref{sec:mfpt-micro}, the $A \to B$ MFPT for the Markov chain with transition matrix $\tatob$ also equals $\tau / \sum_{i \in B} \patob_i - \tau$.
Therefore, the $A \to B$ MFPT equals the quantity $\tloc_i^B$ determined by the first-step relation
\begin{equation}
    \begin{cases}
        \tloc_i^B = \sum_j \tatob_{ij} \tloc_j^B + \tau, & i \notin B, \\
        \tloc_i^B = 0, & i \in B.
    \end{cases}
\end{equation}
We combine the first-step relations for $\tmat$ and $\tatob$ to yield a matrix equation
\begin{equation}
    \tilde{\tloc} - \tilde{\tloc}^B  
    = \tilde{\tmat} \tilde{\tloc} - \tilde{\tmat}^{A \to B} \tilde{\tloc}^B
\end{equation}
Rearranging the equation,
\begin{equation}
    (I - \tilde{\tmat}^{A \to B})(\tilde{\tloc} - \tilde{\tloc}^B)  
    = (\tilde{\tmat} - \tilde{\tmat}^{A \to B}) \tilde{\tloc}
\end{equation}
Multiplying the left- and right-hand sides by $(I - \tilde{\tmat}^{A \to B})^{-1}$,
\begin{equation}
    \tilde{\tloc} - \tilde{\tloc}^B
    = (I - \tilde{\tmat}^{A \to B})^{-1} (\tilde{\tmat} - \tilde{\tmat}^{A \to B}) \tilde{\tloc}.
\end{equation}
Last, we restrict to the $a$ state.
\begin{equation}
    \tloc_a - \tlocatob
    = e_a^\top (I - \tilde{\tmat}^{A \to B})^{-1} (\tilde{\tmat} - \tilde{\tmat}^{A \to B}) \tilde{\tloc}.
\end{equation}
This completes the proof.
\end{proof}

Prop.~\ref{prop:mfpt} shows how any discrepancies between $\tilde{\tmat}$ and $\tilde{\tmat}^{A \to B}$ are inflated by a fundamental matrix
\begin{equation}
    (I - \tilde{\tmat}^{A \to B})^{-1} = \sum_{s=0}^{\infty } (\tilde{\tmat}^{A \to B})^s.
\end{equation}
The $a,j$ element of this matrix quantifies the expected number of visits to cluster $j$ before hitting $B$ for the Markov model defined by $\tmat^{A \to B}$, when a trajectory is started from cluster $a$.
This implies that the errors in the transition matrix are most significant when they occur in origin states that are frequently visited, shift probability toward states with substantially different remaining MFPTs, or involve a combination of these two effects.

\subsection{Analysis of committors}
\label{sec:comm-error}

Last, we evaluate the error in the coarse-grained committor $\qa$.
The standard MSM approach estimates $\qa$ from a first-step relation involving a computed transition matrix $\tmat$.
However, this approach is generally biased unless
\begin{equation}
    \qa_i = \sum_j \tmat_{ij} \qa_j,
    \qquad i \notin A, \, i \notin B,
\end{equation}
which is only expected to hold if the coarse-grained dynamics satisfy the Markov property.
The following proposition quantifies the size of the bias.

\begin{proposition}[Error in committor] \label{prop:committor}
Let $T$ be a computed MSM transition matrix, and let $q$ be the estimated $A$-committor from the first-step relation
\begin{equation}
\begin{cases}
    q_i = \sum_j \tmat_{ij} q_j, & i \notin A, \, i \notin B, \\
    q_i = 0, & i \in B, \\
    q_i = 1, & i \in A.
\end{cases}
\end{equation}
Then the error in the $A$-committor is given by
\begin{equation}
\label{eq:committorbound_final}
    q^\circ - q^{A,\circ}
    = (I - \tmat^\circ)^{-1} d^{\circ},
\end{equation}
where 
\begin{equation}
d = \tmat \qa - \qa
\end{equation}
is the discrepancy vector.
Here, $q^\circ$, $q^{A, \circ}$, $\tmat^\circ$, and $d^\circ$ denote submatrices or subvectors of $q$, $\qa$, $\tmat$, and $d$ after removing all entries corresponding to states $i \in A$ or $i \in B$.
\end{proposition}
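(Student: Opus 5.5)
We mirror the argument for Prop.~\ref{prop:mfpt}, comparing two linear systems on the interior clusters $i \notin A \cup B$. The estimated committor $q$ satisfies the first-step relation $q_i = \sum_j \tmat_{ij} q_j$ exactly for interior $i$. The true coarse-grained committor $\qa$ from \eqref{eq:committor_defn} does not. Instead, by the definition of the discrepancy vector, it satisfies $\qa_i = \sum_j \tmat_{ij}\qa_j - d_i$ for every $i$. The key feature is that $q$ and $\qa$ agree on the boundary clusters: $q_i = \qa_i = 1$ for $i \in A$ and $q_i = \qa_i = 0$ for $i \in B$. The latter holds because $\qa(x)=1$ on $A$ and $\qa(x)=0$ on $B$, and the clusters in $A$ and $B$ are unions of MSM clusters.

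First, subtract the two relations for interior $i$ to obtain
\begin{equation*}
q_i - \qa_i = \sum_j \tmat_{ij}(q_j - \qa_j) + d_i .
\end{equation*}
Next, split the sum over $j$ into interior clusters and boundary clusters $j \in A \cup B$. Since $q_j - \qa_j = 0$ on the boundary, those terms vanish. Only the interior block $\tmat^\circ$ survives, giving $(I - \tmat^\circ)(q^\circ - q^{A,\circ}) = d^\circ$. Multiplying by $(I-\tmat^\circ)^{-1}$ yields \eqref{eq:committorbound_final}. Note the sign: $d = \tmat\qa - \qa$ enters with a plus sign, which is consistent with the claimed formula.

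The main obstacle is justifying that $I - \tmat^\circ$ is invertible, which the statement leaves implicit. I would assume, as is standard for the first-step relation to have a unique solution, that from every interior cluster the chain $\tmat$ reaches $A \cup B$ with positive probability. Then $\tmat^\circ$ is substochastic with spectral radius $<1$, so $(I-\tmat^\circ)^{-1} = \sum_{s\ge 0} (\tmat^\circ)^s$ converges. This Neumann series also gives the interpretation as expected visits to interior clusters before absorption, paralleling the remark after Prop.~\ref{prop:mfpt}. I would state this absorption hypothesis explicitly, noting that it is the same condition needed for $q$ to be well defined.
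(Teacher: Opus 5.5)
Your proposal is correct and follows essentially the paper's route. The paper writes the estimated committor as $q^\circ = T^\circ q^\circ + t^\circ$ with $t_i = \sum_{j \in A} T_{ij}$, then uses $T^\circ q^{A,\circ} + t^\circ = (T q^A)^\circ$, which is the same boundary agreement $q^A = 1$ on $A$ and $q^A = 0$ on $B$, to get $(I - T^\circ)(q^\circ - q^{A,\circ}) = d^\circ$; this is your subtraction argument. Your explicit absorption hypothesis guaranteeing that $I - T^\circ$ is invertible is a worthwhile addition, since the paper uses this inverse without comment.
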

\begin{proof}
Introduce a vector $t$ with entries $t_i = \sum_{j \in A} T_{ij}$.
Then we can rewrite the first-step relation as a linear system
\begin{equation}
    q^\circ = \tmat^\circ q^\circ + t^\circ.
\end{equation}
Next use the definition of the discrepancy vector to check that
\begin{equation}
\begin{aligned}
    (I - \tmat^\circ)(q^\circ - q^{A, \circ}) 
    &= (\tmat^\circ q^{A, \circ} + t^\circ) - q^{A, \circ} \\ 
    &= d^\circ,
\end{aligned}
\end{equation}
Multiplying both sides by $(I - \tmat^\circ)^{-1}$,
\begin{equation}
    q^\circ - q^{A,\circ} = (I - \tmat^\circ)^{-1} d^\circ.
\end{equation}
This completes the proof.
\end{proof}

The error expression \eqref{eq:committorbound_final} shows how violations of the equality
$\qa_i = \sum_j \tmat_{ij} \qa_j$ are amplified by a fundamental matrix
\begin{equation}
    (I - \tmat^\circ)^{-1} = \sum_{s=0}^{\infty} (\tmat^\circ)^s.
\end{equation}
The $i,j$ element of this matrix quantifies the expected number of visits to cluster $j$ before hitting $A$ or $B$ for the Markov model defined by $\tmat$, when a trajectory is started from cluster $i$.
This implies that errors in the transition matrix are most impactful when they occur in origin states that are frequently visited, cause large committor discrepancies, or involve a combination of the two effects.

\section{Concluding discussion}

This perspective paper has examined the limitations of the standard MSM framework in which a single transition matrix is used to estimate many observables.
Even with infinite sampling, this approach can produce biased observables because different observables generally require transition matrices with different within-cluster sampling distributions.
In Sec.~\ref{sec:error-theory}, we quantified these biases through exact error expressions for stationary probabilities, mean first-passage times, and committors.
These results demonstrate how coarse-grained observables depend strongly on
the distribution of configurations sampled within each state.

It is useful to recall why constructing an exactly Markovian MSM is difficult.
Biomolecular energy landscapes contain large numbers of basins and barriers, as illustrated in Fig.~\ref{fig:landscape}.
Given the affordable amount of trajectory data, it is generally infeasible to construct a moderate-sized state decomposition that resolves every dynamically relevant basin and barrier.
Consequently, individual MSM states often contain configurations with substantially different future behavior, leading to the failure of the Markov property at practical lag times.
As one indication of this complexity, a classic study by Czerminski and Elber cataloged 138 local energy minima for a tripeptide \cite{elber1990rxnpaths};
later work has further documented the remarkable landscape complexity of seemingly simple systems \cite{wales2018landscapes,wales2019pathways}.
Overall, the notion of describing a protein-scale system with, say, 100 - 1,000 strictly Markovian states brings with it insurmountable challenges.

Rather than requiring a single Markovian transition matrix, we have presented a framework for computing unbiased coarse-grained observables from data.
Our framework uses an equilibrium transition matrix and an $A \to B$ source–sink transition matrix, each constructed using the corresponding stationary within-cluster sampling distribution.
Together, these objects yield unbiased estimators for equilibrium and nonequilibrium stationary probabilities, mean first-passage times, and committors in the infinite-data limit.

Our theoretical discussion has focused primarily on systematic bias in observables and has largely set aside uncertainty due to finite sampling.
Readers interested in finite-sample limitations of MSMs can consult \cite{scalco2011equilibrium,vitalis2019removal,grubmuller2023uncertainties}.
Nevertheless, Sec.~\ref{sec:algo} describes trajectory-reweighting methods that begin to connect the infinite-data theory with realistic
data regimes.
Because the proposed estimators can be incorporated into existing MSM workflows, and because these reweighting methods have shown
strong empirical performance, we expect the framework to provide a practical route toward reducing systematic MSM bias.
Understanding its performance under finite sampling, imperfect reweighting, and incomplete state-space coverage remains an important direction for future work.

\section*{Acknowledgments}
Jeremy Copperman, John Russo, and Gideon Simpson played a key role in developing the perspective presented here.
We also thank Alex Dickson, Sagar Kania, and Edward Lyman for useful discussions of MSM properties. RJW acknowledges support from a 2026--2027 Hellman Fellowship. DMZ acknowledges support from NIH Grant GM115805.


\printbibliography

\end{document}